\definecolor{darkblue}{RGB}{40, 85, 120} 
\definecolor{quantumviolet}{HTML}{53257F} 
\definecolor{quantumgray}{HTML}{555555} 
\definecolor{quantumgreen}{HTML}{007474} 
\definecolor{quantumblue}{HTML}{3A5FCD} 
\definecolor{quantumpurple}{HTML}{8A2BE2} 
\definecolor{warmorange}{HTML}{FF8C42} 
\definecolor{violetpurple}{HTML}{8A2BE2} 
\definecolor{teal}{HTML}{008080} 
\definecolor{softbeige}{HTML}{ECE5D7} 
\definecolor{charcoalgray}{HTML}{2E2E2E} 
\definecolor{coralred}{HTML}{FF6F61} 
\definecolor{brightyellow}{HTML}{FFD700} 
\definecolor{cyanblue}{HTML}{1C77C3} 
\definecolor{deepbluegray}{HTML}{2C3E99} 
\definecolor{softblue}{HTML}{AFCBFF} 
\newcommand{\orcidd}[1]{\href{https://orcid.org/#1}{\includegraphics[width=8pt]{orcid}}} 
\theoremstyle{theorem}
\newtheorem{theorem}{Theorem}
\newtheorem{lemma}{Lemma}
\newtheorem{definition}{Definition}
\theoremstyle{remark}
\newcommand{\Herm}{\mathsf{Herm}}
\newcommand\Tr{\operatorname{Tr}}
\DeclareMathAlphabet{\mathcal}{OMS}{cmsy}{m}{n}
\definecolor{shadecolor}{gray}{0.9} 
\newsavebox{\@brx}
\newcommand{\llangle}[1][]{\savebox{\@brx}{\(\m@th{#1\langle}\)}%
  \mathopen{\copy\@brx\kern-0.5\wd\@brx\usebox{\@brx}}}
\newcommand{\rrangle}[1][]{\savebox{\@brx}{\(\m@th{#1\rangle}\)}%
  \mathclose{\copy\@brx\kern-0.5\wd\@brx\usebox{\@brx}}}
\def\Tr{\mathop{\mathrm{Tr}}\nolimits}
\newcommand{\QR}{\overrightarrow{Q}_{\rm KD}}
\newcommand{\QL}{\overleftarrow{Q}_{\rm KD}}
\newcommand{\QLR}{\overleftrightarrow{Q}_{\rm KD}}
\begin{document}

\flushbottom

\title{Temporal Kirkwood–Dirac Quasiprobability Distribution and Unification of Temporal State Formalisms through Temporal Bloch Tomography}

\author{Zhian Jia\orcidlink{0000-0001-8588-173X}}
\email{giannjia@foxmail.com}
\affiliation{Institute of Quantum Physics, School of Physics, Central South University,
Changsha 418003, China}
\affiliation{Centre for Quantum Technologies, National University of Singapore, SG 117543, Singapore}
\affiliation{Department of Physics, National University of Singapore, SG 117543, Singapore}

\author{Kavan Modi\orcidlink{0000-0002-2054-9901}}
\email{kavan@quantumlah.org}
\affiliation{Science, Mathematics and Technology Cluster, Singapore University of Technology and Design,
8 Somapah Road, 487372 Singapore}

\author{Dagomir Kaszlikowski}
\email{phykd@nus.edu.sg}
\affiliation{Centre for Quantum Technologies, National University of Singapore, SG 117543, Singapore}
\affiliation{Department of Physics, National University of Singapore, SG 117543, Singapore}


\begin{abstract}
Temporal quantum states generalize the multipartite density operator formalism to the time domain, enabling a unified treatment of quantum systems with both timelike and spacelike correlations. Despite a growing body of temporal state formalisms, their precise operational relationships and conceptual distinctions remain unclear. In this work, we resolve this issue by extending the Kirkwood-Dirac (KD) quasiprobability distribution to arbitrary multi-time quantum processes and, more broadly, to general spatiotemporal settings. We define left, right, and doubled temporal KD quasiprobabilities, together with their real components, which we identify as temporal Margenau-Hill (MH) quasiprobabilities. All of these quantities are experimentally accessible through interferometric measurement schemes. By characterizing their nonclassical features, we show that the generalized KD framework provides a unified operational foundation for a wide class of temporal state approaches and can be directly implemented via temporal or spatiotemporal Bloch tomography.
\end{abstract}

\maketitle


\emph{Introduction.} ---
The nonclassical features of quantum systems are manifold and play a central role in enabling quantum technologies such as quantum computation, communication, and sensing.  
Fundamental notions such as quantum entanglement \cite{Horodecki2009}, Bell nonlocality \cite{Brunner2014RMP}, Kochen–Specker contextuality \cite{Budroni2022KScontext}, measurement incompatibility \cite{Guhne2023incompatible}, uncertainty relations \cite{Busch2014RMPmeasure,Coles2017uncertainty}, the negativity or complex-valuedness of quasiprobability distributions \cite{ferrie2011quasi,Wigner1932on,Kirkwood1933quantum,Dirac1945on,ArvidssonShukur2024KDreview}, and monogamy relations \cite{Ramanathan2012,Jia2016,Jia2018entropic,jia2017exclusivity}, among others, provide key conceptual frameworks for characterizing the intrinsically nonclassical behavior of quantum systems across diverse experimental settings.

A powerful framework for capturing such nonclassical phenomena is the quasiprobabilistic formulation of quantum mechanics, originating from Wigner's seminal work~\cite{Wigner1932on}. Within this context, a natural measure of nonclassicality emerges from the Kirkwood–Dirac (KD) distribution, which encodes certain quantumness absent in classical statistical theories through complex-valued quasiprobability distributions.
The KD distribution was first introduced by Kirkwood in 1933~\cite{Kirkwood1933quantum} and later by Dirac in 1945~\cite{Dirac1945on}. For a pure state $|\psi\rangle$, it is defined as
\begin{equation}\label{eq:standardKD}
    Q_{\rm KD}(a,b) = \langle a|\psi\rangle \langle \psi | b\rangle \langle b|a\rangle = \Tr(\rho_{\psi} \Pi_a \Pi_b),
\end{equation}
where $\rho_{\psi} = |\psi\rangle\langle\psi|$ is the pure state density matrix, and $\Pi_a = |a\rangle\langle a|$, $\Pi_b = |b\rangle\langle b|$ are projectors onto eigenstates of observables $A$ and $B$, respectively.
Comprehensive reviews of the KD distribution and its applications can be found in~\cite{ArvidssonShukur2024KDreview}.
The real part is the KD quasiprobability distribution is called the  Margenau-Hill (MH) quasiprobability distribution \cite{margenau1961correlation}, which also have many applications in quantum information theory.

\begin{figure}[b]
\includegraphics[width=4.5cm]{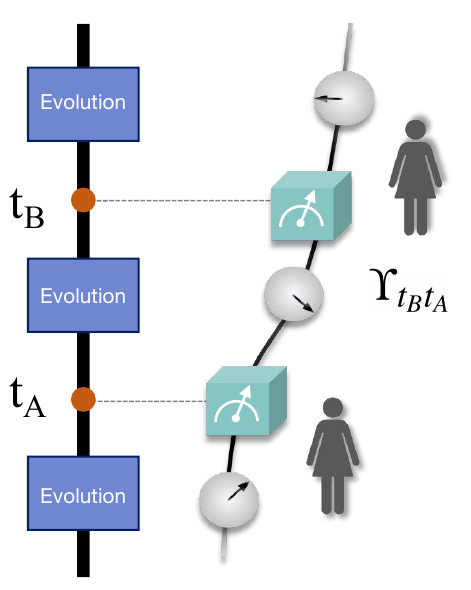}
\caption[Illustration of a temporal state at two time steps]{Illustration of a temporal state at two time steps, from $t_A$ to $t_B$. 
Pauli measurements are performed at these times to carry out temporal state tomography. 
In the PDO framework, one obtains the LvN distribution, from which the corresponding PDO can be reconstructed via tomography. 
In the right temporal KD case, the corresponding right temporal KD quasiprobability distribution is obtained, from which the joint expectation value 
$\langle \{ \sigma_{\mu}(t_B), \sigma_{\nu}(t_A) \} \rangle$ 
is computed. 
This procedure yields the temporal KD state $\overrightarrow{\Upsilon}_{t_B t_A}$. 
The left and doubled KD cases, as well as the left/right and doubled MH cases, proceed analogously.}
\label{fig:TemporalState}
\end{figure}

Temporal quantum phenomena have recently attracted significant attention, with applications in quantum causal modeling, quantum tomography, and quantum control, etc., being actively investigated. 
The use of the KD quasiprobability distribution in temporal settings, however, remains relatively unexplored. 
While it has found applications in various aspects of quantum dynamics 
(see~\cite{ArvidssonShukur2024KDreview,Gherardini2024KD,Lostaglio2023kirkwooddirac} for reviews), 
including out-of-time-ordered correlators (OTOCs)~\cite{larkin1969otoc,Halpern2018otoc,Alonso2019otocKD}, 
the Leggett-Garg test of macroscopic realism~\cite{Dressel2011weakLG}, 
and the consistent-histories interpretation of quantum mechanics~\cite[Sec.~8.2]{ArvidssonShukur2024KDreview}—
all closely tied to temporal processes—
a systematic study of its temporal generalization has yet to be undertaken.

On the other hand, to treat space and time on an equal footing in quantum theory, several concrete spatiotemporal formalisms have been proposed.
These include consistent histories~\cite{griffiths1984consistent}, pseudo-density operators (PDOs)~\cite{fitzsimons2015quantum}, quantum-classical games~\cite{gutoski2007toward}, quantum combs~\cite{Chiribella2009comb}, process tensors~\cite{PhysRevA.97.012127}, process matrices~\cite{oreshkov2012quantum}, multiple-time states~\cite{Aharonov2009multi}, Leifer--Spekkens causal states~\cite{Leifer2013toward}, superdensity operators~\cite{cotler2018superdensity}, symmetric bloom states~\cite{fullwood2022quantum,Parzygnat2023pdo}, and doubled density operators~\cite{jia2024spatiotemporal}, among others.  
The connections and distinctions among these formalisms constitute a crucial topic for further investigation~\cite{liu2023unification,Parzygnat2023pdo}.
Besides their fundamental significance, temporal states also play a key role in the study of information scrambling, quantum chaos, and related phenomena in quantum many-body physics.

In this paper, we systematically extend the KD quasiprobability distribution to temporal and spatiotemporal settings, focusing on multi-time processes of many qudits. 
We construct a discrete spacetime lattice by selecting specific points in a general multi-time quantum circuit. 
Measurements at these points yield spatiotemporal KD quasiprobability distributions, which are experimentally accessible. 
From these distributions, one can define a spatiotemporal state via (spatio)temporal Bloch tomography (Fig.~\ref{fig:TemporalState}); notably, this framework naturally unifies several existing temporal state formalisms.


\begin{table}
\caption{\label{tab:QuantumDistribution}Comparison of different temporal quantum distributions: Kirkwood–Dirac (KD), Margenau--Hill (MH) and L\"{u}ders–von Neumann (LvN).}
\centering
\resizebox{1.0\linewidth}{!}{%
\begin{tabular}{l|c|c|c}
\hline
\hline
 & KD  & MH  & LvN \\
\hline
Right & $\QR$ & $Q_{\rm MH}=(\QR+\QL)/2$ & $\times$ \\
\hline
Left & $\QL$ & $Q_{\rm MH}=(\QR+\QL)/2$ & $\times$  \\
\hline
Doubled & $\QLR$  & $\overleftrightarrow{Q}_{\rm MH}=(\QLR+\QLR^*)/2$ & diagonal of $\QLR$ \\
\hline
\hline
\end{tabular}
}
\end{table}


\emph{Temporal Kirkwood–Dirac quasiprobability distributions.}---
Consider a quantum process involving $(n+1)$ discrete time steps 
$\mathfrak{P} = (\rho_{t_0}, \mathcal{E}_{t_1 \leftarrow t_0}, \ldots, \mathcal{E}_{t_n \leftarrow t_{n-1}})$,
where $\rho_{t_0}$ is the initial state and each $\mathcal{E}_{t_j \leftarrow t_{j-1}}$ denotes a completely positive trace-preserving (CPTP) map describing the evolution from $t_{j-1}$ to $t_j$. 
The system state at an intermediate time $t_k$ is then given recursively by 
$\rho_{t_k} = \mathcal{E}_{t_k \leftarrow t_{k-1}} \circ \cdots \circ \mathcal{E}_{t_1 \leftarrow t_0}(\rho_{t_0})$.
The corresponding \emph{multi-time doubled temporal KD quasiprobability distribution} is defined as
\begin{widetext}
\begin{align}
   \overleftrightarrow{Q}_{\mathrm{KD}}(a_n, \ldots, a_0; b_n, \ldots, b_0)
   = \Tr \!\Big[
      \Pi_{a_n|A_n}^{t_n}\,
      \mathcal{E}_{t_n \leftarrow t_{n-1}}
      \Big(
         \cdots
         \Pi_{a_1|A_1}^{t_1}
         \big(
            \mathcal{E}_{t_1 \leftarrow t_0}
            ( \Pi_{a_0|A_0}^{t_0}\, \rho_{t_0}\, \Pi_{b_0|B_0}^{t_0} )
         \big)
         \Pi_{b_1|B_1}^{t_1}
         \cdots
      \Big)
      \Pi_{b_n|B_n}^{t_n}
   \Big].
\end{align}
\end{widetext}
Taking the left and right marginals yields the corresponding temporal KD quasiprobability distributions, $\QL(a_n,\ldots,a_0)$ and $\QR(b_n,\ldots,b_0)$, which are related by complex conjugation, $\QL^*(a_n,\ldots,a_0) = \QR(a_n,\ldots,a_0)$. 
Further details of their construction are provided in Supplemental Material. 
The L\"{u}ders--von Neumann (LvN) distribution emerges as a diagonal restriction of the doubled KD distribution,
\begin{equation}
    Q_{\mathrm{LvN}}(a_n,\ldots,a_0)
    = \overleftrightarrow{Q}_{\mathrm{KD}}(a_n,\ldots,a_0; a_n,\ldots,a_0).
\end{equation}
Taking the real part of the KD distributions yields the MH distributions. 
Since $\QL$ and $\QR$ are complex conjugates, their real parts coincide, resulting in a single (left/right) temporal MH distribution. 
A summary of the relationships among the various temporal quasiprobability distributions is presented in Table~\ref{tab:QuantumDistribution}.
The temporal KD distribution can be measured experimentally via interferometric scheme (see Supplemental Material) or quantum snapshotting protocol\cite{Wang2024KD,jia2026temporalstatetomographyquantum}.

A particularly important case is the two-time process 
$\mathfrak{P} = (\rho_{t_0}, \mathcal{E}_{t_1 \leftarrow t_0})$, 
which underlies a wide range of physical phenomena. 
Applications span diverse contexts, including the study of quantum chaos and OTOCs~\cite{larkin1969otoc,Halpern2018otoc,Alonso2019otocKD}, 
the analysis of two-time observables~\cite{breuer2002theory,gardiner2004quantum}, 
quantum thermodynamics~\cite{Santiago2024kirkwood}, 
and linear-response theory~\cite{Lostaglio2023kirkwooddirac}. 
The left and right two-time temporal KD quasiprobability distributions have been investigated in several recent works~\cite{Lostaglio2023kirkwooddirac,Santiago2024kirkwood}, 
where they are often regarded as conventional (spatial) KD distributions. 
Here, however, we stress that the temporal KD quasiprobability fundamentally encodes not only the statistical properties of the initial quantum state but also the dynamical features of the evolution map $\mathcal{E}_{t_1 \leftarrow t_0}$, also see \cite{Wang2024KD,Lie2025stateovertime,lie2025probingquantumstatesspacetime} for this perspective. 
This distinction becomes crucial when generalizing the framework to multi-time quantum processes.

The temporal KD quasiprobability distribution satisfies the Kolmogorov consistency condition. The marginalization over intermediate variables yields reduced KD distributions, as all quantum channels involved are CPTP maps. As a special case, each fixed-time distribution $p_{t_i}(a_i)$ corresponds to the physical measurement $\Pi_{a_i|A_i}^{t_i}$:
  $p_{t_i}(a_i) = \Tr \!\left( \rho_{t_i} \, \Pi_{a_i|A_i}^{t_i} \right)$,
and is thus a marginal of the temporal KD distribution.  
This ensures that the temporal KD distribution qualifies as a well-defined temporal quasiprobability distribution.  
Formally, we have the following lemma:

\begin{lemma}[Kolmogorov consistency condition]\label{lemma:marginal}
    Let $\mathfrak{P}_{t_0,\ldots, t_n} = (\rho_{t_0}, \mathcal{E}_{t_1 \leftarrow t_0}, \ldots, \mathcal{E}_{t_n \leftarrow t_{n-1}})$ be a quantum process over $n+1$ time steps. 
    Consider a sub-process $\mathfrak{P}'_{t_0,t_{i_1},\ldots,t_{i_k}}$ over $k+1$ time steps, where the initial state is the same and the evolution is a subset of that in $\mathfrak{P}_{t_0,\ldots,t_n}$, i.e., $\{t_{i_1},\ldots,t_{i_k}\} \subset \{t_1,\ldots,t_n\}$ with $t_{i_1} < \cdots < t_{i_k}$, and $\mathcal{E}_{t_{i_j} \leftarrow t_{i_{j-1}}}= \mathcal{E}_{t_{i_j} \leftarrow t_{i_{j}-1}} \circ \mathcal{E}_{t_{i_j} \leftarrow t_{i_{j}  -2  } } \circ  \cdots \circ \mathcal{E}_{t_{i_{j-1} +1}  \leftarrow t_{i_{j-1}}} $. 
    Then, the temporal KD quasiprobability distribution associated with $\mathfrak{P}'_{t_0,t_{i_1},\ldots,t_{i_k}}$ is obtained as the marginal of the temporal KD quasiprobability distribution associated with $\mathfrak{P}_{t_0,\ldots,t_n}$.
    If two subsets of time steps have a nonempty intersection, the corresponding marginals coincide on the overlapping region.
\end{lemma}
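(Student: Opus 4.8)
The plan is to reduce the statement to a single structural fact: summing a projector-valued quantity over all outcomes of a PVM at an intermediate time collapses a sandwiched projector to the identity, and summing the two copies of projectors attached to the bra- and ket-sides jointly produces the CPTP channel with that time step "integrated out." Concretely, I would write the doubled temporal KD distribution $\overleftrightarrow{Q}_{\mathrm{KD}}$ for $\mathfrak{P}_{t_0,\ldots,t_n}$ as the nested-sandwich trace given in the main text, and then marginalize over the outcome pairs $(a_j,b_j)$ for every index $j\notin\{0,i_1,\ldots,i_k\}$. For the left marginal $\QL$ and right marginal $\QR$ the argument is the same because they are obtained by a further marginalization (and are related by complex conjugation, as already noted), so it suffices to treat the doubled distribution.

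The key computation is the following: fix an intermediate time $t_j$ with a PVM $\{\Pi^{t_j}_{a|A_j}\}_a$ (and similarly on the $B$-side). Since $\sum_a \Pi^{t_j}_{a|A_j} = \identity$ and likewise $\sum_b \Pi^{t_j}_{b|B_j} = \identity$, the partial sum
\begin{equation}
\sum_{a_j,b_j}\Pi^{t_j}_{a_j|A_j}\,\mathcal{E}_{t_j\leftarrow t_{j-1}}(X)\,\Pi^{t_j}_{b_j|B_j}
= \mathcal{E}_{t_j\leftarrow t_{j-1}}(X)
\end{equation}
for any operator $X$ produced by the inner layers. Crucially, $\mathcal{E}_{t_j\leftarrow t_{j-1}}$ is then left adjacent to $\mathcal{E}_{t_{j+1}\leftarrow t_j}$ with nothing between them, so by functoriality of composition $\mathcal{E}_{t_{j+1}\leftarrow t_j}\circ\mathcal{E}_{t_j\leftarrow t_{j-1}} = \mathcal{E}_{t_{j+1}\leftarrow t_{j-1}}$; more generally, collapsing a whole block of consecutive "unmarked" times $t_{i_{j-1}+1},\ldots,t_{i_j-1}$ yields exactly the composite map $\mathcal{E}_{t_{i_j}\leftarrow t_{i_{j-1}}} = \mathcal{E}_{t_{i_j}\leftarrow t_{i_j-1}}\circ\cdots\circ\mathcal{E}_{t_{i_{j-1}+1}\leftarrow t_{i_{j-1}}}$ appearing in the definition of the sub-process $\mathfrak{P}'$. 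Applying this block by block, from the innermost layer outward, transforms the nested-sandwich expression for $\mathfrak{P}_{t_0,\ldots,t_n}$ into precisely the one for $\mathfrak{P}'_{t_0,t_{i_1},\ldots,t_{i_k}}$. The trace-preserving property is implicitly needed only to guarantee the resulting object is still a normalized quasiprobability, but for the marginalization identity itself complete positivity and linearity of the channels are what matter (complete positivity is not even strictly required here — linearity and the PVM completeness relations suffice).

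The claim about overlapping subsets is then immediate: if $S_1,S_2\subset\{t_1,\ldots,t_n\}$ are two index sets, the marginal onto $S_1\cap S_2$ can be computed either by first marginalizing $\mathfrak{P}_{t_0,\ldots,t_n}$ onto $S_1$ and then onto $S_1\cap S_2$, or onto $S_2$ and then onto $S_1\cap S_2$; since each step is an instance of the identity just established and summation is associative and commutative, both routes give the same distribution, namely the one associated with $\mathfrak{P}'_{t_0}\cup(S_1\cap S_2)$. The one point requiring a little care — and the closest thing to an obstacle — is bookkeeping: making sure that when a block of unmarked times is collapsed, the residual composite map is exactly the one the sub-process definition stipulates, and that no marked projector $\Pi^{t_{i_\ell}}$ is accidentally swallowed. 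This is handled cleanly by induction on the number of unmarked blocks, peeling off one block at a time and invoking the completeness relation and composition functoriality at each stage; I would present it as such, with the base case $k=n$ (nothing to marginalize) being trivial.
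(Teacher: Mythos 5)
Your proposal is correct and follows exactly the route the paper indicates: the paper's proof is the one-line observation that completeness of the projective measurements at each unmarked time step collapses the corresponding sandwiched projectors to the identity, after which adjacent channels compose into the coarse-grained maps of the sub-process. Your write-up simply supplies the block-by-block bookkeeping that the paper leaves implicit, so there is nothing to add.
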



\emph{Quantumness of temporal Kirkwood–Dirac quasiprobability distribution.}---
Since temporal (and more generally spatiotemporal) KD quasiprobabilities can take negative, greater-than-one, or even complex values, these features are usually regarded as signatures of non-classicality, which we refer to as \emph{temporal KD non-classicality} or \emph{temporal KD quantumness}. In the spatial setting, KD non-classicality underlies quantum advantages in various quantum information tasks \cite{ArvidssonShukur2020KDmetrology,Halpern2018otoc,Alonso2019otocKD}.
Analogously, temporal KD non-classicality can be defined, and it may have potential applications in quantum information tasks involving temporal quantum processes.

As we have shown, the spatiotemporal KD quasiprobability distribution satisfies the Kolmogorov axioms, except that it may assume complex values outside the interval $[0,1]$.  
The negative or non-real values of $\overrightarrow{Q}_{\rm KD}(a_n, \ldots, a_0)$ are often referred to as ``non-classical'' or as indicating quantumness (the same applies to the other two types of generalized KD distributions). This non-classicality can be quantified by
\begin{equation}
    \mathcal{N}[\overrightarrow{Q}_{\rm KD}(a_n, \ldots, a_0)] = \sum_{a_0, \ldots, a_n} \left| \overrightarrow{Q}_{\rm KD}(a_n, \ldots, a_0) \right| - 1.
\end{equation}
The measure $\mathcal{N}$ depends on the initial state $\rho_{t_0}$, all quantum channels $\mathcal{E}_{t_j\leftarrow t_{j-1}}$, and the measurement settings.  
We emphasize that the quantumness of the spatiotemporal KD distribution depends on the underlying quantum evolutions, whereas the quantumness of the standard KD distribution is determined solely by the state and measurement settings.
This spatiotemporal KD quantumness thus captures the intrinsic spatiotemporal structure encoded within the distribution.

The temporal KD non-classicality measure has several important properties:

1. Non-negativity and faithfulness: 
    \[
        \mathcal{N}[\overrightarrow{Q}_{\rm KD}] \geq 0, 
     \]   
    for all $\QR$.    
   And
      $  \mathcal{N}[\overrightarrow{Q}_{\rm KD}] = 0 $
  if and only if  $\overrightarrow{Q}_{\rm KD}$  is classical.

2. Convexity over the initial state. For any $\lambda \in [0,1]$
    \begin{align}
        &\mathcal{N}\big[\QR[\lambda \rho_{t_0} +(1-\lambda) \omega_{t_0}]\big] \notag\\
        &\quad \leq \lambda \, \mathcal{N}[\QR[\rho_{t_0}]] + (1-\lambda) \, \mathcal{N}[\QR[\omega_{t_0}]].
    \end{align}
    
3. Convexity over quantum channels. For any $\lambda \in [0,1]$ and evolutions $\mathcal{E}_{t_j\leftarrow t_{j-1}}$ and $\mathcal{K}_{t_j\leftarrow t_{j-1}}$ from time step $t_{j-1}$ to $t_{j}$, we have
    \begin{align}
        &\mathcal{N}\Big[\QR[\lambda \mathcal{E}_{t_j\leftarrow t_{j-1}} 
        +(1-\lambda) \mathcal{K}_{t_j\leftarrow t_{j-1}}]\Big] \notag\\
        & \leq \lambda \, \mathcal{N}[\QR[\mathcal{E}_{t_j\leftarrow t_{j-1}}]] 
        + (1-\lambda) \, \mathcal{N}[\QR[\mathcal{K}_{t_j\leftarrow t_{j-1}}]].
    \end{align}

4. Decreasing under coarse-graining.
The coarse-graining of  $\QR(b_n,\cdots,b_0)$ is defined as $\QR^{\rm cg}(I_s,\cdots,I_0)=\sum_{\{b_k\in I_l\}} \QR(b_n,\cdots,b_0)$, where $\{I_l\}_{l=0}^s$ denotes a disjoint partition of $\{b_n,\cdots,b_0\}$. Then 
\begin{equation}
    \mathcal{N}[\QR^{\rm cg}(I_s,\cdots,I_0)]\leq \mathcal{N}[\QR(b_n,\cdots,b_0)].
\end{equation}

5. For the temporal KD quasiprobability $\QR(b_n,\cdots,b_0)$ and any of its marginals 
$\QR(b_{i_k},\cdots,b_{i_0})=\sum_{\{b_n,\cdots,b_0\}\setminus \{b_{i_k},\cdots,b_{i_0}\}}\QR(b_n,\cdots,b_0)$,
the resulting marginals exhibit a reduced degree of non-classicality. 
In other words, for an $(n+1)$-step multi-time quantum process, restricting to any $k$-step subset necessarily decreases the non-classicality of the temporal KD distribution.

6. For a product quantum process 
$\mathfrak{P}_1 \otimes \mathfrak{P}_2$ 
(in which both the initial states and the evolutions factorize),  
together with product measurement settings, the temporal KD negativity satisfies
\begin{align}
& \mathcal{N}\!\left(\QR(\mathfrak{P}_1 \otimes \mathfrak{P}_2)\right) 
 = \mathcal{N}\!\left(\QR(\mathfrak{P}_1)\right)
    \mathcal{N}\!\left(\QR(\mathfrak{P}_2)\right) \nonumber \\
 &\quad + \mathcal{N}\!\left(\QR(\mathfrak{P}_1)\right)
        + \mathcal{N}\!\left(\QR(\mathfrak{P}_2)\right)
        + 1 .
\end{align}
If we instead define the measure 
$\mathcal{N}' = \log \sum_{b_n,\ldots,b_0} |\QR(b_n,\ldots,b_0) |$,  
then the product rule becomes additive:
\begin{equation}
\mathcal{N}'\!\left(\QR(\mathfrak{P}_1 \otimes \mathfrak{P}_2)\right)
  = \mathcal{N}'\!\left(\QR(\mathfrak{P}_1)\right)
  + \mathcal{N}'\!\left(\QR(\mathfrak{P}_2)\right).
\end{equation}

All of the above statements also hold for the left temporal and double KD quasiprobability distributions, $\QL$ and $\QLR$.


A necessary and sufficient condition characterizing the quantumness of temporal KD quasiprobability distributions remains an open problem. 
Nevertheless, we establish the following partial result in this direction. 
For a given choice of measurement settings and dynamics, one may define temporal KD joint measurement operators (measurements in Heisenberg picture).  
For example, for the right temporal KD quasiprobability distribution we introduce
\begin{equation}\label{eq:MR}
\overrightarrow{M}_{b_n,\ldots,b_0} 
= \Pi_{b_0}^{t_0} \, 
  \mathcal{E}^{\dagger}_{t_1 \leftarrow t_0} \Bigl( 
    \Pi_{b_1}^{t_1} \, 
    \mathcal{E}^{\dagger}_{t_2 \leftarrow t_1} \bigl( 
        \Pi_{b_2}^{t_2} \cdots 
        \mathcal{E}^{\dagger}_{t_n \leftarrow t_{n-1}}(\Pi_{b_n}^{t_n}) 
    \bigr) 
  \Bigr),
\end{equation}
which admit a natural interpretation as back-evolving all measurement operators to the initial time $t_0$.  
The corresponding quasiprobability can then be written as
\begin{equation}
   \QR(b_n,\ldots,b_0)
   = \Tr\!\bigl( \overrightarrow{M}_{b_n,\ldots,b_0} \, \rho_{t_0} \bigr).
\end{equation}
Left and doubled constructions follow analogously.

\begin{theorem}[Classicality criterion]\label{thm:nogo}
Consider a multi-time quantum process
\(
\mathfrak{P}
= \bigl(\rho_{t_0}, \mathcal{E}_{t_1 \leftarrow t_0}, \ldots, 
        \mathcal{E}_{t_n \leftarrow t_{n-1}} \bigr),
\)
together with projective measurements 
$\{\Pi_{b_k}^{t_k}\}_{b_k}$ at each time step.
Suppose there exists a probability distribution 
$p(b_n,\ldots,b_0 \mid \mathfrak{P})$ satisfying the Kolmogorov axioms for all initial states $\rho_{t_0}$ such that  
(the temporal KD distribution with $\mathcal{N}[\QR]=0$ provides an example):
\begin{enumerate}
    \item convex linearity in $\rho_{t_0}$;
    \item correct marginals,
    \[
    \sum_{b_0,\,\ldots,\,\widehat{b_k},\,\ldots,\,b_n}
    p(b_n,\ldots,b_0 \mid \mathfrak{P})
    = \Tr\!\bigl(\rho_{t_k}\Pi_{b_k}^{t_k}\bigr), \quad \forall\, k,
    \]
    where $\rho_{t_k}$ denotes the output state at time $t_k$.
\end{enumerate}
Then the temporal joint measurement operators satisfy
\[
[M_{b_n,\ldots,b_1},M_{b_0}] = 0,
\quad
M_{b_n,\ldots,b_0}
= M_{b_n,\ldots,b_1} M_{b_0}
= M_{b_0} M_{b_n,\ldots,b_1}.
\]
If all channels are unitary, then each $M_{b_k}$ (marginals of  temporal joint measurement operators ) is itself a projector, and
\[
[M_{b_k},M_{b_l}] = 0,\ \forall\, k,l, 
\quad
M_{b_n,\ldots,b_0} = \prod_{j=0}^n M_{b_j}.
\]
Consequently, if $\mathcal{N}[\QR]>0$, then there must exist indices 
$b_0,\ldots,b_n$ for which the operators $M_{b_0},\ldots,M_{b_n}$ fail to commute.
\end{theorem}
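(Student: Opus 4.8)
The strategy is to represent the hypothetical classical distribution by a POVM on the initial Hilbert space and let positivity together with the prescribed marginals force the commutation. \emph{Step 1 (linearization).} Holding the channels fixed and varying only $\rho_{t_0}$, convex linearity makes $\rho_{t_0}\mapsto p(b_n,\ldots,b_0\mid\mathfrak{P})$ an affine map on the state space; since density operators span $\Herm(\mathcal{H}_{t_0})$, it extends uniquely to a linear functional, so there is a Hermitian $M_{b_n,\ldots,b_0}$ with $p(b_n,\ldots,b_0\mid\mathfrak{P})=\Tr(M_{b_n,\ldots,b_0}\rho_{t_0})$. Nonnegativity of $p$ for every state forces $M_{b_n,\ldots,b_0}\ge 0$, and normalization forces $\sum_{b_n,\ldots,b_0}M_{b_n,\ldots,b_0}=\identity$; thus $\{M_{b_n,\ldots,b_0}\}$ is a POVM, which coincides with $\{\MR_{b_n,\ldots,b_0}\}$ of Eq.~\eqref{eq:MR} when $p=\QR$. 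Writing $\rho_{t_k}=\mathcal{E}_{t_k\leftarrow t_{k-1}}\circ\cdots\circ\mathcal{E}_{t_1\leftarrow t_0}(\rho_{t_0})$ and dualizing, the marginal hypothesis becomes the operator identity $\sum_{\text{all but }b_k}M_{b_n,\ldots,b_0}=M_{b_k}$, where $M_{b_k}:=\mathcal{E}^{\dagger}_{t_1\leftarrow t_0}\circ\cdots\circ\mathcal{E}^{\dagger}_{t_k\leftarrow t_{k-1}}(\Pi_{b_k}^{t_k})$ is the back-evolved effect; in particular $M_{b_0}=\Pi_{b_0}^{t_0}$ is an honest projector.

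\emph{Step 2 (projector squeezing and the general claim).} The technical engine is the elementary fact that $0\le E\le\Pi$ with $\Pi$ a projector implies $E=\Pi E=E\Pi=\Pi E\Pi$: indeed $0\le(\identity-\Pi)E(\identity-\Pi)\le(\identity-\Pi)\Pi(\identity-\Pi)=0$ forces $\sqrt{E}(\identity-\Pi)=0$. Applying this to $M_{b_n,\ldots,b_0}\le\sum_{b_1,\ldots,b_n}M_{b_n,\ldots,b_0}=M_{b_0}=\Pi_{b_0}^{t_0}$ shows each $M_{b_n,\ldots,b_0}$ is block-diagonal in $\Pi_{b_0}^{t_0}$. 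Summing over $b_0$ and using orthogonality of $\{\Pi_{b_0}^{t_0}\}$ gives $\Pi_{b_0'}^{t_0}M_{b_n,\ldots,b_1}=M_{b_n,\ldots,b_0'}=M_{b_n,\ldots,b_1}\Pi_{b_0'}^{t_0}$ with $M_{b_n,\ldots,b_1}:=\sum_{b_0}M_{b_n,\ldots,b_0}$, hence $[M_{b_n,\ldots,b_1},M_{b_0}]=0$ and $M_{b_n,\ldots,b_0}=M_{b_0}M_{b_n,\ldots,b_1}=M_{b_n,\ldots,b_1}M_{b_0}$, the first assertion. One checks that the single-variable marginals of the reduced operator $M_{b_n,\ldots,b_1}$ are again the $M_{b_k}$'s, so the same squeezing can be repeated at any later time whose back-evolved effect happens to be a projector.

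\emph{Step 3 (unitary case and the corollary).} If all channels are unitary, $M_{b_k}=W_k^{\dagger}\Pi_{b_k}^{t_k}W_k$ with $W_k=U_k\cdots U_1$, so \emph{every} $M_{b_k}$ is a projector. Then for each index string, $M_{b_n,\ldots,b_0}\le M_{b_j}$ simultaneously for all $j$ (the $j$-th entry), so by Step 2 it is absorbed by each $M_{b_j}$ and annihilated by every orthogonal $M_{c_j}$; summing over all entries except positions $k$ and $l$ gives $M_{b_k}M_{b_l}=M_{b_l}M_{b_k}$, both equal to the two-variable marginal, so $[M_{b_k},M_{b_l}]=0$ for all $k,l$. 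Since the $M_{b_j}$ then mutually commute, $M_{b_0}M_{b_1}\cdots M_{b_n}$ is a projector and $\{M_{b_0}\cdots M_{b_n}\}$ a PVM; combined with $0\le M_{b_n,\ldots,b_0}\le M_{b_0}\cdots M_{b_n}$ and both families summing to $\identity$, term-by-term equality yields $M_{b_n,\ldots,b_0}=\prod_{j}M_{b_j}$. Finally, one verifies by telescoping the unitaries that $\MR_{b_n,\ldots,b_0}=M_{b_0}M_{b_1}\cdots M_{b_n}$ holds identically in the unitary case, so were all $M_{b_k}$ mutually commuting, $\{\MR_{b_n,\ldots,b_0}\}$ would be a PVM and $\QR=\Tr(\MR\,\rho_{t_0})\ge0$ with unit sum, giving $\mathcal{N}[\QR]=0$; contrapositively $\mathcal{N}[\QR]>0$ forces some $M_{b_k},M_{b_l}$ to fail to commute. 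The left and doubled cases run identically.

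I expect the delicate points to be: (i) making Step 1 fully rigorous — that ``for all $\rho_{t_0}$'' genuinely upgrades the scalar equalities to operator equalities, and that the representing operators are unique, positive, and normalized; and (ii) in Step 3, checking that the residual operators obtained by successive marginalization remain POVMs whose relevant single-variable marginal is exactly the projective measurement $\{M_{b_j}\}$, so that the iteration actually closes. The projector-squeezing lemma and the orthogonality bookkeeping are routine once those structural facts are in place.
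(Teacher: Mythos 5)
Your proof is correct and follows essentially the same route as the paper: your Step~1 is the paper's extension lemma for convex-linear functionals (yielding the positive representing operators and the operator-level marginal identities), and your ``projector squeezing'' is precisely the paper's Lemma~\ref{lemma:commu}, applied once for the general statement and iterated for the unitary case. The only addition is that you spell out the final contrapositive explicitly via the identity $\overrightarrow{M}_{b_n,\ldots,b_0}=M_{b_0}\cdots M_{b_n}$ in the unitary case, which the paper leaves implicit; this is a welcome clarification but not a different method.
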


The proof is given in Supplemental Material.  
For two-time processes, the result reduces to Theorem~1 of Ref.~\cite{Lostaglio2023kirkwooddirac}.  
Theorem~\ref{thm:nogo} thus clarifies the operational origin of nonclassicality in temporal KD quasiprobabilities.

\begin{table}
\caption{\label{tab:comp}Comparison of different quantum temporal state formalisms.}
\centering
\resizebox{1.0\linewidth}{!}{%
\begin{tabular}{l|c|c|c}
\hline
\hline
 & Hermitian & Local space   & Born rule \\
\hline
Left \& right KD temporal states & No & $\mathcal{H}_{(x,t)}$ & Left \& right KD distributions \\
\hline
Left/right MH temporal state & Yes & $\mathcal{H}_{(x,t)}$ & Left/right MH distribution \\
\hline
PDO & Yes & $\mathcal{H}_{(x,t)}$ & 2-time $=$ MH; $(n>2)$-time $\neq$ MH  \\
\hline
Doubled density operator & No  & $\mathcal{H}^L_{(x,t)}\otimes \mathcal{H}^R_{(x,t)}$ & Doubled KD distributions\\
\hline
Doubled MH temporal state & Yes  & $\mathcal{H}^L_{(x,t)}\otimes \mathcal{H}^R_{(x,t)}$ & Doubled MH distributions\\
\hline
\hline
\end{tabular}
}
\end{table}

\emph{Unification of temporal states through temporal Bloch tomography.}---
The temporal KD quasiprobability distribution offers a natural route to a unified framework for spatiotemporal tomography, placing several previously established spatiotemporal state formalisms under a single KD-based construction. Because the essential structure already appears in the purely temporal setting, we focus on temporal states below; the generalization to full spatiotemporal processes follows directly.

\begin{definition}[Temporal state]
For a multi-time quantum process $\mathfrak{P} = \bigl(\rho_{t_0}, \mathcal{E}_{t_1 \leftarrow t_0}, \dots, \mathcal{E}_{t_n \leftarrow t_{n-1}}\bigr)$, a \emph{temporal state} is an operator on the temporal Hilbert space $\bigotimes_{i=0}^n \mathcal{H}_{t_i}$ defined by
\begin{equation}
\Upsilon_{t_n, \cdots, t_0} = \mathcal{E}_{t_n \leftarrow t_{n-1}} \star_{\mathrm{TS}} (\cdots \star_{\mathrm{TS}} (\mathcal{E}_{t_1 \leftarrow t_0} \star_{\mathrm{TS}} \rho_{t_0})),
\end{equation}
where $\star_{\mathrm{TS}}$ denotes the temporal product operation, dependent on the chosen formalism. Temporal states satisfy the quantum Kolmogorov consistency condition: for any temporal subset $\{t_0, t_{i_1}, \dots, t_{i_k}\} \subseteq \{t_0, t_1, \dots, t_n\}$, the corresponding temporal state $\Upsilon_{t_{i_k}, \dots, t_{i_1},t_0}$ is the reduced state of $\Upsilon_{t_n, \cdots ,t_0}$.
\end{definition}

A canonical example is the PDO formalism \cite{fitzsimons2015quantum,jia2023quantumspace,liu2023unification,Liu2025PDO,Song2024PDO,Parzygnat2023pdo,Lie2025stateovertime,lie2025probingquantumstatesspacetime}, arising from \emph{temporal Bloch tomography}: one measures temporal joint expectation values of Pauli operators and reconstructs the object via Bloch expansion. Here, $\star_{\mathrm{TS}}$ is the Jordan product $A \star_{\mathrm{TS}} B = \{A,B\}/2$.
In a two-time scenario (Fig.~\ref{fig:TemporalState}), sequential Pauli measurements yield temporal LvN probabilities $p(a,b|\sigma_{\mu}^{t_1},\sigma_{\nu}^{t_0})$, from which
\begin{equation}
T^{\mu,\nu} = \big\langle \{\sigma_{\mu}^{t_1},\sigma_{\nu}^{t_0}\}\big\rangle = \sum_{a,b} ab\, p(a,b|\sigma_{\mu}^{t_1},\sigma_{\nu}^{t_0})
\end{equation}
follows. Inserting these correlators into the Bloch representation yields the PDO
\begin{equation}
R_{t_1,t_0}=\Upsilon^{\rm LvN}_{t_1,t_0} = 2^{-2} \sum_{\mu,\nu} T^{\mu,\nu}\,\sigma_{\mu}\otimes\sigma_{\nu},
\end{equation}
which is Hermitian but not generally positive semidefinite.

Replacing LvN probabilities with temporal KD quasiprobabilities yields a natural generalization.  
Right KD measurements give the right KD temporal correlators
\begin{equation}
\overrightarrow{T}^{\mu_n,\ldots,\mu_0}
= \sum_{a_n,\ldots,a_0} 
a_n \cdots a_0 \;
\overrightarrow{Q}_{\rm KD}\!\big(
a_n,\ldots,a_0 \,\big|\, 
\sigma_{\mu_n},\ldots,\sigma_{\mu_0}
\big),
\end{equation}
and analogously one obtains left KD correlators  
$\overleftarrow{T}^{\nu_n,\ldots,\nu_0}$  
and doubled correlators  
$\overleftrightarrow{T}^{\mu_n,\ldots,\mu_0;\,\nu_n,\ldots,\nu_0}$.  
Temporal states follow from the Bloch representation; for the doubled case,
\begin{equation}
\overleftrightarrow{\Upsilon}
= \frac{1}{d^{\,n+1}}
\sum_{\substack{\mu_0,\ldots,\mu_n\\\nu_0,\ldots,\nu_n = 0}}^{d^2-1}
\overleftrightarrow{T}^{\mu_n,\ldots,\mu_0;\,\nu_n,\ldots,\nu_0}
\biggl(\bigotimes_{i=0}^{n}\sigma_{\mu_i}\biggr)
\otimes
\biggl(\bigotimes_{j=0}^{n}\sigma_{\nu_j}\biggr).
\end{equation}
Left and right KD temporal states follow by replacing  
$\overleftrightarrow{T}$ with $\overleftarrow{T}$ or $\overrightarrow{T}$.  
Applying the same construction to the real parts of the KD distributions yields temporal MH states 
$\Upsilon^{\rm MH}$ and $\overleftrightarrow{\Upsilon}^{\rm MH}$.
Each of these states extends straightforwardly to general spatiotemporal processes.






For a local system of dimension $d$, we consider generalized Pauli (Hilbert--Schmidt) operators, satisfying:
(i) $\sigma_0 = \mathds{I}$; 
(ii) $\Tr(\sigma_j)=0$ for $j\ge 1$; 
(iii) orthogonality $\Tr(\sigma_\mu\sigma_\nu)= d\,\delta_{\mu\nu}$.
These operators form an orthogonal basis of $\Herm(\mathcal{H})$, $\mathcal{H}=\mathbb{C}^d$.
Any tomographically complete Hilbert--Schmidt basis suffices; for example, ``light-touch operators''~\cite{Liu2025PDO,fullwood2025operatorrep}.
Temporal states constructed using different orthonormal bases are related by operator-space basis transformation.

\begin{theorem}
For a given spatiotemporal quantum process, the temporal states  satisfy the following properties:

1. \emph{KD spatiotemporal state:} The doubled KD spatiotemporal state coincides with doubled density operator \cite{jia2024spatiotemporal} and we have:
\begin{align}
    & \overrightarrow{\Upsilon} = \Tr_{L} \overleftrightarrow{\Upsilon}, \quad 
      \overleftarrow{\Upsilon} = \Tr_{R} \overleftrightarrow{\Upsilon}, \quad
      \overrightarrow{\Upsilon} = \overleftarrow{\Upsilon}^{\dagger}.
\end{align}
The fixed-time density operator $\rho_{t_k}$ can be obtained from these spatiotemporal states via partial trace:
\begin{equation}
  \rho_{t_k} = \Tr_{t_n, \ldots, \widehat{t_k}, \ldots, t_0} \overrightarrow{\Upsilon} 
               = \Tr_{t_n, \ldots, \widehat{t_k}, \ldots, t_0} \overleftarrow{\Upsilon},
\end{equation}
where $\widehat{t_k}$ indicates that the partial trace is taken over all time steps except $t_k$. 
Since the left and right KD spatiotemporal states are the respective reduced states of the doubled KD state, the equal-time density operator can likewise be obtained from $\overleftrightarrow{\Upsilon}$. 
Moreover, they satisfy the quantum Kolmogorov consistency condition.

2. \emph{MH spatiotemporal states:} 
The MH states are Hermitianized versions of the KD states,
\begin{equation}
    \Upsilon^{\rm MH} = \frac{1}{2}\big(\overleftarrow{\Upsilon} + \overleftarrow{\Upsilon}^{\dagger}\big), 
    \quad 
    \overleftrightarrow{\Upsilon}^{\rm MH} = \frac{1}{2}\big(\overleftrightarrow{\Upsilon} + \overleftrightarrow{\Upsilon}^{\dagger}\big).
\end{equation}
From property~1, the fixed-time density operator $\rho_{t_k}$ can also be obtained from the MH spatiotemporal states, which similarly satisfy the quantum analogue of the Kolmogorov consistency condition.

3. \emph{Spatiotemporal Born rule:} 
For the KD states (temporal case shown for simplicity), take the inner product of measurements and state gives the corresponding distributions, e.g., for right KD temporal state
\begin{align}
\QR(b_n, \ldots, b_0) &= 
\Tr \Big[ 
    (\Pi_{b_n} \otimes \cdots \otimes \Pi_{b_0}) 
    \, \overrightarrow{\Upsilon} 
\Big]
\end{align}
Similar relations hold for the MH states.
The LvN distribution can also be obtained from $\overleftrightarrow{\Upsilon}$ via taking inner product with $(\Pi_{a_n} \otimes \cdots \otimes \Pi_{a_0}) \otimes (\Pi_{a_n} \otimes \cdots \otimes \Pi_{a_0})$\cite{jia2024spatiotemporal}.
\end{theorem}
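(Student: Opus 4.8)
The plan is to carry out everything at the level of Bloch (Hilbert--Schmidt) coefficients. First I would record the closed form of the three families of KD correlators: substituting the spectral decompositions $\sigma_{\mu_k}=\sum_{a_k}a_k\,\Pi_{a_k|\sigma_{\mu_k}}$ into the marginal expressions for $\QR,\QL,\QLR$ used in the setup of Lemma~\ref{lemma:marginal}, summing the eigenvalue moments replaces each projector $\Pi^{t_k}_{\bullet}$ by $\sigma^{t_k}_{\mu_k}$, so that
\begin{equation}
\overrightarrow{T}^{\mu_n,\ldots,\mu_0}=\Tr\!\big[\mathcal{E}_{t_n\leftarrow t_{n-1}}\!\big(\cdots\mathcal{E}_{t_1\leftarrow t_0}(\rho_{t_0}\sigma^{t_0}_{\mu_0})\,\sigma^{t_1}_{\mu_1}\cdots\big)\,\sigma^{t_n}_{\mu_n}\big],
\end{equation}
with the one-sided analogue for $\overleftarrow{T}$ and the two-sided one for $\overleftrightarrow{T}$. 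The structural fact to isolate is that each correlator is \emph{multilinear} in the inserted Pauli operators.

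Next I would establish the spatiotemporal Born rule (property~3) directly. Writing $\Pi_{b_k}=\tfrac1d\sum_{\mu_k}\Tr(\Pi_{b_k}\sigma_{\mu_k})\,\sigma_{\mu_k}$ via the orthogonality $\Tr(\sigma_\mu\sigma_\nu)=d\,\delta_{\mu\nu}$, I would insert the Bloch expansion of $\overrightarrow{\Upsilon}$ into $\Tr[(\Pi_{b_n}\otimes\cdots\otimes\Pi_{b_0})\overrightarrow{\Upsilon}]$, collapse the $\sigma$-traces, and use multilinearity of $\overrightarrow{T}$ to recognise the result as $\QR(b_n,\ldots,b_0)$; the doubled case is identical with Pauli expansions on both tensor legs, and specialising the two legs to the same indices together with $Q_{\rm LvN}(a_n,\ldots,a_0)=\QLR(a_n,\ldots,a_0;a_n,\ldots,a_0)$ yields the LvN statement. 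The MH versions then follow once property~2 is in hand.

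For property~1, I would partial-trace the doubled Bloch expansion: $\Tr(\sigma_{\mu_i})=d\,\delta_{\mu_i,0}$ kills every left leg except the all-zero one, and since $\sigma_0=\mathds{I}$ removes all left insertions one has $\overleftrightarrow{T}^{0,\ldots,0;\nu_n,\ldots,\nu_0}=\overrightarrow{T}^{\nu_n,\ldots,\nu_0}$, so $\Tr_L\overleftrightarrow{\Upsilon}=\overrightarrow{\Upsilon}$ (the $d^{\pm(n+1)}$ normalisations match) and symmetrically $\Tr_R\overleftrightarrow{\Upsilon}=\overleftarrow{\Upsilon}$. The identity $\overrightarrow{\Upsilon}=\overleftarrow{\Upsilon}^{\dagger}$ follows from $\QL^{*}=\QR$ and real eigenvalues, which give $(\overleftarrow{T}^{\mu_n,\ldots,\mu_0})^{*}=\overrightarrow{T}^{\mu_n,\ldots,\mu_0}$, combined with Hermiticity of the $\sigma_{\mu}$. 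For the fixed-time reductions I would trace out all legs but $t_k$: only the $\mu_k$-terms survive, the channels after $t_k$ disappear by trace preservation, $\overrightarrow{T}^{0,\ldots,\mu_k,\ldots,0}=\Tr(\rho_{t_k}\sigma_{\mu_k})$, and Bloch inversion rebuilds $\rho_{t_k}$. Quantum Kolmogorov consistency is the same computation for a general temporal subset: tracing out an interior step $t_j$ sets $\mu_j=0$, which by $\sigma_0=\mathds{I}$ fuses $\mathcal{E}_{t_{j+1}\leftarrow t_j}(\,\cdot\,\sigma^{t_j}_{\mu_j})$ into $\mathcal{E}_{t_{j+1}\leftarrow t_j}\circ\mathcal{E}_{t_j\leftarrow t_{j-1}}$ --- precisely the composed channel of the sub-process in Lemma~\ref{lemma:marginal} --- while tracing out a boundary step uses trace preservation or the initial-state marginal; iterating reproduces the sub-process correlators, hence its temporal state. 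The coincidence with the doubled density operator of Ref.~\cite{jia2024spatiotemporal} then follows because that object is characterised by the same doubled Born rule and the Hilbert--Schmidt basis is tomographically complete, so the reconstruction is unique.

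Finally, for property~2 the MH correlator is $\Re\overrightarrow{T}^{\mu_n,\ldots,\mu_0}=\tfrac12(\overrightarrow{T}^{\mu_n,\ldots,\mu_0}+\overleftarrow{T}^{\mu_n,\ldots,\mu_0})$; feeding this into the Bloch expansion and using Hermiticity of the $\sigma_{\mu}$ gives $\Upsilon^{\rm MH}=\tfrac12(\overleftarrow{\Upsilon}+\overleftarrow{\Upsilon}^{\dagger})$ and likewise for $\overleftrightarrow{\Upsilon}^{\rm MH}$, and since $\rho_{t_k}$ is Hermitian and partial trace commutes with $\dagger$, the fixed-time and Kolmogorov statements descend from property~1 verbatim. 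I expect the genuine obstacle to be the careful verification of Kolmogorov consistency at interior times --- matching the traced-out Bloch expansion to the composed-channel sub-process of Lemma~\ref{lemma:marginal} requires tracking exactly where the $\sigma^{t_j}_{\mu_j}=\mathds{I}$ insertion sits relative to the two adjacent channels --- together with pinning down the precise definition of the doubled density operator of Ref.~\cite{jia2024spatiotemporal} so that ``coincides with'' is a true identity rather than a relabeling; both are conceptually routine but require care.
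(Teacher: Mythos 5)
Your proposal is correct and follows essentially the same route as the paper's (much terser) proof: the orthogonality $\Tr(\sigma_\mu\sigma_\nu)=d\,\delta_{\mu\nu}$ of the Hilbert--Schmidt basis, the left/right/doubled KD relations of Lemma~\ref{lemma:QDleftright}, and the Kolmogorov consistency of Lemma~\ref{lemma:marginal}, with the identification with the doubled density operator deferred to tomographic completeness as in Ref.~\cite{jia2024spatiotemporal}. Your version simply spells out the Bloch-coefficient bookkeeping that the paper leaves implicit.
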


\begin{figure}[t]
    \centering
    \includegraphics[width=1.0\linewidth]{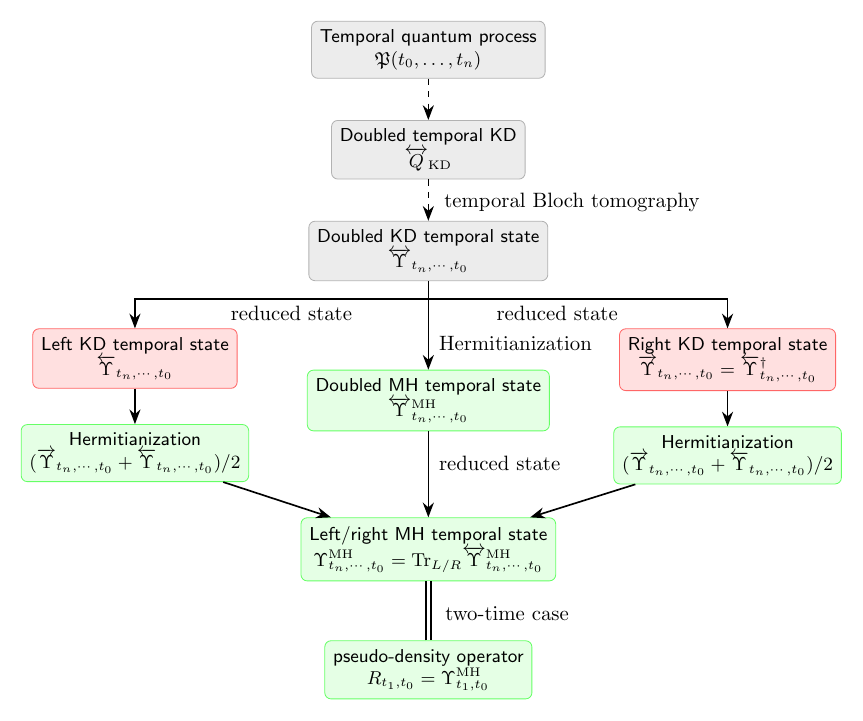}
\caption{The relationships between different temporal states arising from temporal Bloch tomography. Starting from the doubled temporal KD quasiprobability distributions, all temporal states can be obtained by taking partial trace or Hermitianization.}
    \label{fig:relation}
\end{figure}


For operators $M \in \mathbf{B}(\mathcal{H}_B \otimes \mathcal{H}_A)$ and 
$N \in \mathbf{B}(\mathcal{H}_C \otimes \mathcal{H}_B)$, we define (using the symbol $\star$ to emphasize its temporal composition)
$N \star M := (N_{CB} \otimes \mathds{I}_A)(\mathds{I}_C \otimes M_{BA})$,
where subscripts indicate the corresponding Hilbert spaces.
Given an initial density operator $\rho_{t_0} \in \mathbf{B}(\mathcal{H}_{t_0})$ and the Jamio\l kowski operator of a quantum channel,
$J[\mathcal{E}_{t_i \leftarrow t_{i-1}}] 
    = \sum_{k,l} 
    \mathcal{E}_{t_i \leftarrow t_{i-1}}(|k\rangle\!\langle l|)
    \otimes |l\rangle\!\langle k|$,
the temporal state associated with the KD quasiprobability distribution $\QR$ can be written as
\begin{equation} \label{eq:recurRep}
    \overrightarrow{\Upsilon}_{t_n, \cdots ,t_0}
    =
    J[\mathcal{E}_{t_n, \leftarrow, t_{n-1}}]
    \star \cdots \star
    J[\mathcal{E}_{t_1, \leftarrow, t_0}]
    \star \rho_{t_0}.
\end{equation}
Equivalently, the construction admits the recursive form
\begin{equation}
    \overrightarrow{\Upsilon}_{t_k, \cdots, t_0}
    =
    J[\mathcal{E}_{t_k, \leftarrow, t_{k-1}}]
    \star
    \overrightarrow{\Upsilon}_{t_{k-1}, \cdots, t_0},
    \qquad
    \overrightarrow{\Upsilon}_{t_0} = \rho_{t_0}.
\end{equation}
See supplementary material for a poof.
Using the MH quasiprobability distribution, the left/right MH temporal state is
\begin{equation}\label{eq:MHtemporalSate}
\Upsilon^{\rm MH} = \frac{1}{2} \big( \overrightarrow{\Upsilon} + \overleftarrow{\Upsilon} \big) 
                   = \frac{1}{2} \big( \overrightarrow{\Upsilon} + \overrightarrow{\Upsilon}^{\dagger} \big),
\end{equation}
with the corresponding temporal Born rule
\begin{equation}
Q_{\rm MH}(b_n, \ldots, b_0) = \Tr \big[ (\Pi_{b_n} \otimes \cdots \otimes \Pi_{b_0}) \, \Upsilon^{\rm MH} \big].
\end{equation}
This construction connects naturally to the PDO formalism~\cite{fitzsimons2015quantum}. In the two-time case, $\Upsilon^{\rm MH}$ coincides with $R_{t_1 t_0}$~\cite{zhao2018geometry,fullwood2022quantum,Parzygnat2023pdo,Liu2025PDO}; for general multi-time processes, the left/right MH temporal state differs from the PDO, see Supplemental Material for more details.

Connections and distinctions between the various spatiotemporal state formalisms are summarized in Table~\ref{tab:comp} and Figure~\ref{fig:relation}.

\vspace{1em}
\emph{Discussion.}---
In this work, we generalize the KD and MH quasiprobability distributions to the temporal and spatiotemporal settings and demonstrate that they play a crucial role in unifying different temporal (and spatiotemporal) state formalisms via temporal (and spatiotemporal) tomography.  
Our findings open several potential avenues for application, including providing deeper insights into temporal correlations within both the temporal KD framework and the temporal state formalism, as well as applications to Leggett–Garg tests and quantum metrology.

There are also several intriguing directions that warrant further exploration. 
In particular, it would be interesting to extend our results to the general process matrix formalism, 
from which one can further define a spatiotemporal KD quasiprobability distribution. 
The relation between the quantumness of the temporal KD distribution and temporal correlations in the temporal state formalism also deserves further investigation. 
Moreover, our definition of the temporal KD distribution can be naturally extended to the continuous-variable setting, 
which may have applications in the study of temporal quantum processes in quantum optical systems. 
All these topics are left for future work.

\begin{acknowledgments}
We thank Nicole Y. Halpern, Seok Hyung Lie, and Rathindra Das, for their valuable comments on the manuscript. Z. J. also thanks Zhi-Hao Ma and Yunlong Xiao for beneficial feedback.
Z. J. and D. K. are supported by the National Research Foundation, Singapore, and A*STAR under its CQT Bridging Grant and CQT–Return of PIs EOM YR1-10 Funding. Z. J. is also supported by the CQT Young Researcher Career Development Grant, start-up grant of Central South University (Grant No. 502045031) and the Frontier Interdisciplinary Direction ``Quantum Technologies'' Project of Central South University (Grant No. 506010805).
\end{acknowledgments}

\bibliographystyle{apsrev4-1-title}
\bibliography{Jiabib}

\end{document}